\newtheorem{theorem}{Theorem}[section]
\newtheorem{proposition}{Proposition}
\theoremstyle{definition}
\newtheorem{remark}{Remark}
\title{Alternative Angle-Based Approach to the $\mathcal{KS}$-Map. An Interpretation Through Symmetry and Reduction}
\author{S. Ferrer$^\dag$ and F. Crespo$^\ddag$}
\begin{document}
\maketitle


\begin{center}
Space Dynamic Group, DITEC, Facultad Inform\'atica\\
Universidad de Murcia\\
30100 Campus de Espinardo. Murcia, Spain$^\dag$
\end{center}

\vspace{0.5cm}
\begin{center}
Grupo GISDA, Dept. de Matem\'atica, Facultad de Ciencias\\
Universidad del B\'{i}o-B\'{i}o\\
Av. Collao 1202. Concepci\'on, Chile$^\ddag$
\end{center}

\begin{abstract}
The $\mathcal{KS}$ map is revisited in terms of an $S^1$-action in $T^*\mathbb{H}_0$ with the bilinear function as the associated momentum map. Indeed, the $\mathcal{KS}$ transformation maps the $S^1$-fibers related to the mentioned action to single points. By means of this perspective a second twin-bilinear function is obtained with an analogous $S^1$-action. We also show that the connection between the 4-D isotropic harmonic oscillator and the spatial Kepler systems can be done in a straightforward way after regularization and through the extension to 4 degrees of freedom of the Euler angles, when the bilinear relation is imposed. This connection incorporates both bilinear functions among the variables. We will show that an alternative regularization separates the oscillator expressed in Projective Euler variables. This setting takes advantage of the two bilinear functions and another integral of the system including them among a new set of variables that allows to connect the 4-D isotropic harmonic oscillator and the planar Kepler system. In addition, our approach makes transparent that only when we refer to rectilinear solutions, both bilinear relations defining the $\mathcal{KS}$ transformations are needed.
\end{abstract}


\section{Introduction}
The Kustaanheimo-Stiefel transformation, from now on $\mathcal{KS}$, was introduced first in \cite{Kustaanheimo1964} in spinor formulation. However, in \cite{KustaanheimoStiefel1965} the knowledge of spinor is suppressed and the transformation is presented within a matrix setting. Its importance relies in the fact that it connects two of the most remarkable systems of classical mechanics. Namely, the 4-D isotropic oscillator and the spatial Kepler systems, which are some of the very rare few examples of maximally superintegrable and are defined by the parametric Hamiltonian functions
\begin{equation}
\label{O4G}
\mathcal{H}_\omega= \frac{1}{2}\sum_i^4(p_i^2 +\, \omega\,q_i^2),
\end{equation}
and
\begin{equation}
\label{ }
\mathcal{K}_\mu= \frac{1}{2}| \mathbf{y}|^2 -\dfrac{\mu}{| \mathbf{x}|}
\end{equation}
where $\omega$ and $\mu$ are positive  parameters and $\mathbf{x},\mathbf{y}\in T^*\mathbb{R}_0^3$. The isotropic oscillator and the bounded Kepler system describe orbits with the same geometry. However, from the dynamical point of view they show significant differences due to the origin displacement from the center to the foci. Precisely, the incompleteness of the Kepler system flow may be the major discrepancy and boosted a number of regularizing efforts. The connection between both systems is not an evident issue and has a long history that goes back to the time of Euler \cite{Euler1767}, which for the one dimensional Kepler motion already established the connection with the one dimensional harmonic oscillator by introducing a square-root coordinate $q=\sqrt{x}$ and a fictitious time. To mimic the process in the two dimensional case took more than one and a half century, when Levi-Civita \cite{LeviCivita1920} bridged the gap by introducing a conformal map. The final step was carried out by Kustaanheimo and Stiefel in a striking way \cite{KustaanheimoStiefel1965}, since the spatial Kepler system is linked to the four dimensional oscillator. A few years later Moser \cite{Moser1970} also regularized the 3-D Kepler mapping it with the geodesic flow in the four dimensional sphere. 

In our study we focus on the $\mathcal{KS}$ map hinging on quaternions. This map was obtained by imposing a generalization scheme to the Levi-Civita process and working with matrices. As a result, the generalization of the Levi-Civita to the three dimensional case is not easy and the process ends up with an 4-D system. Although, quaternions were first rejected by Stiefel and Scheifele \cite{Sti} as a suitable tool for describing regularization in celestial mechanics, subsequent studies refuted this idea \cite{Vivarelli1985,Deprit94}. For a concise explanation of this fact see \cite{Waldvogel2008} and the references therein.  

One of our aims is to review the issue of the extra dimension trying to seed some light on this matter. Specifically, we refer to the  bilinear relation appearing in the original definition of this transformation and the key role it performs in the immersion of the 3-D Kepler system in four dimensions. Moreover, we will define a second bilinear relation, named as twin-bilinear, which plays an equivalent role. That is to say, they both are the momentum maps of two $S^1$-actions on $T^*\mathbb{H}_0\equiv T^*\mathbb{C}_0^2\equiv T^*\mathbb{R}_0^4$. The $\mathcal{KS}$ map is closely related to the named action since it maps $S^1$-orbits in $T^*\mathbb{H}_0$ to single points. This $S^1$-point association of the $\mathcal{KS}$ map was already evidenced in \cite{Saha2009} from a geometrical interpretation point of view in terms of quaternions.

Additionally, we will show how the Projective Euler variables, which are a customized version of the classical Euler angles studied in \cite{FerrerCrespo2014}, provide a straightforward way for reaching the $\mathcal{KS}$ transformation. Precisely, these set of variables provides local coordinates for $T^*\mathbb{H}_0$ that allows for ``reading" the angles of the $S^1$ orbits associated to the mentioned actions. The actual oscillator-Kepler connection is carried out by changing to the Projective Euler variables plus a regularization. However, we discuss two possible regularizations here; one of them allows for the mentioned connection. The other one makes the oscillator in these variables separable by splitting it in two 1-DOF subsystems. Moreover, the action-angle variables associated to one of these subsystems (the spherical rotor in the classical Euler angles) provides a new set of variables in which the connection of the 4-D oscillator is now established with the planar Kepler system. This connection is now determined regardless of the value of the bilinear and twin-bilinear momentum map. Thus, our approach made transparent that only when we refer to rectilinear solutions, the bilinear relation defining the $\mathcal{KS}$ transformation is needed.

Finally we outline the way in which the paper is organized. In Section~\ref{sec:RegularizingMaps} we briefly review the Levi-Civita and $\mathcal{KS}$ transformations. The next Section~\ref{sec:S3Reduction} introduces the actions allowing to explain the $\mathcal{KS}$ map and the bilinear relations in this context. Section~\ref{sec:Angles} is devoted to the Projective Euler coordinates, which provides a direct derivation of the $\mathcal{KS}$ map by incorporating the angles associated to the action among the new variables. The following step is to incorporate more symmetries to the variables. This process is carried out by the Projective Andoyer coordinates, which connects the 4-D oscillator with a planar Kepler system.

\section{The Levi-Civita and $\mathcal{KS}$ maps}
\label{sec:RegularizingMaps}
Before moving on to the substance of the Levi-Civita and the $\mathcal{KS}$ map we include here some useful notation in quaternions that we will use later. The interested reader would find a nice reference on quaternions in the text of \cite{Kuipers}.

The division ring of quaternions is denoted by $\mathbb{H}$ and is generated by the 4th four roots of unity $\{1,\boldsymbol{i},\boldsymbol{j},\boldsymbol{k}\}$, which satisfy the following rules
$$\boldsymbol{ii}=\boldsymbol{jj}=\boldsymbol{kk}=-1,\quad \boldsymbol{ijk}=-1.$$
Thus we have that $\boldsymbol{ij}=\boldsymbol{jk}=\boldsymbol{-ji}=-1.$ The elements in $\mathbb{H}$ are then expressed as 
$$\mathbf{ x}=x_1+x_2\boldsymbol{i}+x_3\boldsymbol{j}+x_4\boldsymbol{k}.$$
These elements are made of a real part $\mathrm{Re}\left[\mathbf{ x}\right]=x_1$ plus the imaginary vector part $\mathrm{Im}\left[\mathbf{ x}\right]=(x_2\boldsymbol{i}+x_3\boldsymbol{j}+x_4\boldsymbol{k})$, which may be considered as a vector in $\mathbb{R}^3$. Sometimes it will be convenient to adopt the following notation $\mathbf{x}=x_1+x$, where $x=\mathrm{Im}\left[\mathbf{ x}\right]$. Quaternions having zero scalar part are called pure or imaginary and may be regarded both as a vector in $\mathbb{R}^3$ and as a quaternion. Multiplication is performed in the usual manner, like polynomial multiplication, taking the above relations into account or alternatively making use of the ``dot'' and ``cross'' product in $\mathbb{R}^3$
\begin{equation}
\label{QuaternionMultiplication}
\mathbf{x} \cdot \mathbf{y}=(x_1y_1-x\cdot y\,,\,x_1y\,+\,y_1x\,+\,x\times y).
\end{equation}
For the sake of a cleaner notation we will drop the dot. Finally, we define $\mathbf{x}^*=(x_1,-x)$ to be the conjugate quaternion of $\mathbf{x}$. Finally, the norm of $\mathbf{x}$ is the length as a four dimensional vector and is denoted by $|\mathbf{x}|=\sqrt{\mathbf{x}\,\mathbf{x}^*}=\sqrt{\mathbf{x}^*\mathbf{x}}$. 

\subsection{The Levi-Civita transformation}
\label{sec:LC}
Along this section we will identify $\mathbb{R}^2$ with $\mathbb{C}$ and the complex numbers as immersed in the quaternions by considering the last two coefficient corresponding to $\boldsymbol{j}$ and $\boldsymbol{k}$ identically zero. Hence, let $\mathbf{ x},\mathbf{ q}\in\mathbb{C}$ be given by $\mathbf{ x}={ x}_1+{ x}_2\boldsymbol{i}$ and $\mathbf{ q}={ q}_1+{ q}_2\boldsymbol{i}$, the 2-D Levi-Civita map reads as follows
\begin{equation}
\label{eq:LC2d}
\mathbf{ x}=\mathbf{ q}\cdot 1 \cdot \mathbf{ q},\qquad { x}_1={ q}_1^2-{ q}_2^2,\; { x}_2=2{ q}_1{ q}_2.
\end{equation}
This transformation may be lifted to the cotangent bundle $T^*\mathbb{C}_0$ by imposing $\mathbf{ p}d\mathbf{ q}=\mathbf{ y}d\mathbf{ x}$. Indeed, there are three more alternatives to (\ref{eq:LC2d}) leading to equivalent transformations, all of them correspond to any of the following ones
\begin{gather}
\begin{aligned}
\label{eq:LC2dAlternatives}
&\mathbf{ x}=\mathbf{ q}\cdot -1 \cdot \mathbf{ q},&\qquad &{ x}_1={ q}_2^2-{ q}_1^2,\; { x}_2=-2{ q}_1{ q}_2,\\
&\mathbf{ x}=\mathbf{ q}\cdot \boldsymbol{i} \cdot \mathbf{ q},&\qquad &{ x}_1=-2 { q}_1 { q}_2,\; { x}_2={ q}_1^2 - { q}_2^2,\\
&\mathbf{ x}=\mathbf{ q}\cdot -\boldsymbol{i} \cdot \mathbf{ q},&\qquad &{ x}_1=2 { q}_1 { q}_2,\; { x}_2={ q}_2^2 - { q}_1^2.
\end{aligned}
\end{gather}
Thus, the definition of the Levi-Civita transformation
$$\mathcal{LC}:T^*\mathbb{C}_0\longrightarrow T^*\mathbb{C}_0,\quad (\mathbf{ q},\mathbf{ p})\rightarrow(\mathbf{ x},\mathbf{ y})$$
may be carried out with any of the above coordinate mappings. Here we recall the expression in the first case
\begin{gather}
\begin{aligned}
\label{eq:LC2dCompleta}
&{ x}_1={ q}_1^2-{ q}_2^2,\; { x}_2=2{ q}_1{ q}_2\\
&{ y}_1=\frac{q_1p_1-q_2p_2}{2\,({ q}_1^2+{ q}_2^2)},\; { y}_2=\frac{q_1p_2+q_2p_1}{2\,({ q}_1^2+{ q}_2^2)}.
\end{aligned}
\end{gather}

The Levi-Civita transformation may be used for regularization purposes. It is well known that the flow of the Kepler system is not complete. Hence, to avoid this drawback in the planar case, this transformation was introduced in \cite{LeviCivita1920}. Indeed, let us consider the Hamiltonian of the Kepler system
$$ \mathcal{K}:T^*\mathbb{R}_{0}^{2} \longrightarrow \mathbb{R};\quad \mathbf{(x,y)}\rightarrow\frac{1}{2}|\mathbf{y}|^2-\frac{\mu}{|\mathbf{x}|}, $$
where $\mu>0$. In addition, for $h\neq0$ we also examine the auxiliary Hamiltonian 
 \begin{eqnarray}\label{fhamil}
 \mathbf{H}(\mathbf{ x,y})=\frac{|\mathbf{x} |}{h}(\mathcal{K}+2h^2 )+\frac{\mu}{h},
 \end{eqnarray}
which associated Hamiltonian vector field read as follows
 \begin{eqnarray} \label{hamil}
\dot{\mathbf{x}}&=&\frac{|\mathbf{x}|}{h}\,\frac{\partial\mathcal{K}}{\partial \mathbf{y}},  \nonumber \\
\dot{\mathbf{y}}&=&-\frac{|\mathbf{x}|}{h}\,\frac{\partial\mathcal{K}}{\partial \mathbf{x}}-\left[\frac{\partial}{\partial \mathbf{x}} \left(\frac{|\mathbf{x}|}{h}\right)(\mathcal{K}+2 h^2 ) \right].
\end{eqnarray}
Note that for the energy level $\mathcal{K}=-2h^2$ the vector field associated with $ \mathcal{H}$ matches with $ \mathcal{K}$ after applying the time regularization $dt={|\mathbf{x}|}/{h}ds$. Then, after a straightforward computation (\ref{eq:HamAux}) may be given by
\begin{equation}\label{eq:HamAux}
 \mathbf{H}\mathbf{(x,y)}=\frac{|\mathbf{x}|}{E}\left[\frac{1}{2}|\mathbf{y}|^2+2 h^2 \right],
 \end{equation}
which after making use of the Levi-Civita transformation becomes
\begin{equation}
\mathcal{H}\mathbf{(q,p)}=\frac{1}{2}\left(\frac{|\mathbf{p}|^2}{4h}+4h|\mathbf{q}|^2\right).\nonumber
\end{equation}
Finally, by assuming $\mathbf{q}={\mathbf{q}}/({2\sqrt{h}})$ and $\mathbf{p}=2\mathbf{p}\sqrt{h}$ and with a slight abuse of notation, we get to the following more convenient expression
\begin{equation}
\mathcal{H }(\mathbf{q,p})=\frac{1}{2}\left(|\mathbf{q}|^2+|\mathbf{p}|^2\right).
\end{equation}
In other words, the Levi-Civita map and the time scaling connected the 2-D Kepler and the 2-D isotropic oscillator for a fixed energy value.

\subsection{The Kustaanheimo-Stiefel map}
\label{sec:KS}
The situation changes when we face the three dimensional Kepler system since the generalization of the 2-D case is not straightforward. The celebrated paper of \cite{KustaanheimoStiefel1965} represents a cornerstone in the regularization of the spatial Kepler system, it was an attempt of extending the Levi-Civita map to a transformation mapping two 4-dimensional spaces. Although they did not succeed in defining such a transformation, these authors came with a mapping of $\mathbb{R}^4$ onto $\mathbb{R}^3$, the well known $\mathcal{KS}$ map, which was good enough for regularization purposes. 

In this section we recall some established facts regarding the $\mathcal{KS}$ map. Then, they will be reconsidered from a new point of view. We start with the definition of the $\mathcal{KS}$ map, which may be considered as a Hopf map type. In the literature there is not a common agreement in the exact expression of this map as well as the way in which it is introduced. Namely, the definition may be given by using an orthogonal matrix, \cite{KustaanheimoStiefel1965,Roa2016}, Pauli matrices, \cite{Cushman97,vdMeer2015} or alternatively hinging in quaternions \cite{Saha2009}. We will follow the quaternionic approach defining the $\mathcal{KS}$ map as it is given in \cite{Saha2009}
\begin{gather}
\begin{aligned}
\label{eq:KSmap}
&\mathcal{KS}:T^* \mathbb{H}_0\longrightarrow \Sigma\subset T^* \mathbb{H}_0\quad (\mathbf{ q},\mathbf{ p})\rightarrow(\mathbf{ x},\mathbf{y}),\\
& \mathbf{ x}=\mathbf{ q}^*\,\boldsymbol{k}\,\mathbf{ q}, \quad \mathbf{ y}=\dfrac{\mathbf{ q}^*\,\boldsymbol{k}\,\mathbf{ p}}{2\mathbf{ q}^*\,\mathbf{ q}},
\end{aligned}
\end{gather}
where  $T^* \mathbb{K}_0^n\equiv\mathbb{K}^n-\{0\}\times\mathbb{K}^n$ and the explicit formulas are given by
\begin{gather}
\begin{aligned}
\label{eq:KSmapCoor}
& \mathbf{ x}=\left(0,2\,( q_2 q_4 -q_1 q_3),2\, (q_1 q_2 +  q_3 q_4), q_1^2 - q_2^2 - q_3^2 + q_4^2\right),\\
& \mathbf{ y}=\dfrac{1}{2\mathbf{  q}^*\,\mathbf{  q}}\big(\,p_1 q_4-p_4 q_1+ p_2 q_3 - p_3 q_2 ,\\
&\phantom{ \mathbf{ y}=\dfrac{1}{2\mathbf{  q}^*\,\mathbf{  q}}(\,}  p_4 q_2   + p_2 q_4- p_1 q_3-p_3 q_1 ,\\
&\phantom{ \mathbf{ y}=\dfrac{1}{2\mathbf{  q}^*\,\mathbf{  q}}(\,} p_2 q_1 + p_1 q_2 + p_4 q_3 + p_3 q_4,\\
&\phantom{ \mathbf{ y}=\dfrac{1}{2\mathbf{  q}^*\,\mathbf{  q}}(\,}  p_1 q_1 - p_2 q_2 - p_3 q_3 + p_4 q_4\big).
\end{aligned}
\end{gather}
Considering the restriction of the $\mathcal{KS}$ map to the submanifold $\Xi_0=0$, where we have
\begin{equation}
\label{eq:XiMomentumMapPsi}
\Xi_0(\mathbf{ q},\mathbf{ p})=p_1 q_4-p_4 q_1+ p_2 q_3 - p_3 q_2 .
\end{equation} 
This constraint is equivalent up to indices permutation to the well known bilinear relation and by impossing it, the $\mathcal{KS}$-image set $\Sigma$ becomes $T^* \mathbb{R}_0^3$. Therefore, $\Xi_0$ will be named as the twin-bilinear function and $\Xi_0=0$ the twin-bilinear relation, which yields a surjective Poisson mapping on the image with respect to the Poisson structures comming from the restriction of the standard symplectic form $\Omega=-d \omega$ and $\omega=\mathrm{Re}\left[ \mathbf{ p}^*d\mathbf{ q} \right]$. The original bilinear relation appearing in \cite{KustaanheimoStiefel1965} is given by
\begin{equation}
\label{eq:XiMomentumMapPhi}
\Xi_1(\mathbf{ q},\mathbf{ p})=p_1 q_4-p_4 q_1 + p_3 q_2 - p_2 q_3.
\end{equation}
The reader would appreciate how quaternions help us providing an elegant and compact definition of the $\mathcal{KS}$ map. In addition, it allows for several variations as in the Levi-Civita case. Indeed, the following alternatives lead to a variety of index permutations
\begin{gather}
\begin{aligned}
\label{eq:KSmapAlt}
& \mathbf{ x}=\mathbf{ q}^*\,(\pm\boldsymbol{i})\,\mathbf{ q}, \quad \mathbf{ y}=\dfrac{\mathbf{ q}^*\,(\pm\boldsymbol{i})\,\mathbf{ p}}{2\mathbf{ q}^*\,\mathbf{ q}},\\
& \mathbf{ x}=\mathbf{ q}^*\,(\pm\boldsymbol{j})\,\mathbf{ q}, \quad \mathbf{ y}=\dfrac{\mathbf{ q}^*\,(\pm\boldsymbol{j})\,\mathbf{ p}}{2\mathbf{ q}^*\,\mathbf{ q}},\\
& \mathbf{ x}=\mathbf{ q}^*\,(\pm\boldsymbol{k})\,\mathbf{ q}, \quad \mathbf{ y}=\dfrac{\mathbf{ q}^*\,(-\boldsymbol{k})\,\mathbf{ p}}{2\mathbf{ q}^*\,\mathbf{ q}},\\
\end{aligned}
\end{gather}
Note that this freedom of choice is an instance of the \textit{defining vector} concept introduced in \cite{Breiter2017}. However, the differences between the definitions of the $\mathcal{KS}$ map in \cite{KustaanheimoStiefel1965} and \cite{Saha2009} are not explained by the defining vector\footnote{The definitions of the $\mathcal{KS}$ in \cite{KustaanheimoStiefel1965} and in \cite{Saha2009} differ formally in a permutation of the indices 1 and 4 of the quaternion components.}. Indeed, they require the geometric interpretation of both definitions to be explained. In  \cite{Saha2009} it is pointed out  that given  $\left(\mathbf{ q}_{\beta},\mathbf{ p}_{\beta}\right)$ defined by
\begin{equation}
\label{eq:InterpretGeo1}
\left(\mathbf{ q}_{\beta},\mathbf{ p}_{\beta}\right)=\left(\left(\cos\beta-\sin\beta\,\boldsymbol{k}\right)\mathbf{ q},\left(\cos\beta-\sin\beta\,\boldsymbol{k}\right)\mathbf{ p}\right)
\end{equation}
$\left(\mathbf{ q}_{\beta},\mathbf{ p}_{\beta}\right)$ and $\left(\mathbf{ q},\mathbf{ p}\right)$ correspond with the same image by $\mathcal{KS}$. That is to say, $\mathcal{KS}$ maps $T^* \mathbb{H}_0$ on $ T^* \mathbb{R}_0^3$ and the fiber associated to each $\mathbf{ x}\in\mathbb{R}_0^3$ is $S^1$. A similar, but not equivalent, interpretation may be given to the definition given in \cite{KustaanheimoStiefel1965}. Regarding $T^* \mathbb{H}_0$ as $T^* \mathbb{C}^2_0$, let $\mathbf{ x}\in \mathbb{H}_0$ given by $\mathbf{ x}=(\mathbf{ z},\mathbf{ w})$, where $\mathbf{ z},\mathbf{ w}\in\mathbb{C}$ and we have that $\mathbf{ z}$ corresponds to the indices $1,4$ and $\mathbf{ w}$ is associated to indices $2,3$ of $\mathbf{ x}$. That is to say, we consider $\mathbf{ x}=z_1+w_1\mathbf{ i}+w_2\mathbf{ j}+z_2\mathbf{ k}$. Then, we define $\left(\mathbf{ q}^{\beta},\mathbf{ p}^{\beta}\right)$ by
\begin{equation}
\label{eq:InterpretGeo2}
\left(\mathbf{ q}^{\beta},\mathbf{ p}^{\beta}\right)=\left(e^{i\beta}\mathbf{ u},e^{-i\beta}\mathbf{ v},e^{i\beta}\mathbf{ z},e^{-i\beta}\mathbf{ w}\right).
\end{equation}

\section{Symmetries and Reduction on $T^* \mathbb{H}_0$}
\label{sec:S3Reduction}
The main feature of the $\mathcal{KS}$ map is that it allows to connect the harmonic oscillator and the Kepler system. In the previous Section~\ref{sec:RegularizingMaps} we have seen that $\mathcal{KS}$ hinges in a particular value of the bilinear functions. Now we pay more attention to the remaining bilinear choices and additional quadratic functions that will play a fundamental role. They are defined by
\begin{gather}
\begin{aligned}
\label{eq:Invariants}
 \tau_1(\mathbf{ q},\mathbf{ p})&=q_1p_1+q_2p_2+q_3p_3+q_4p_4 ,  \\
 \tau_2(\mathbf{ q},\mathbf{ p})& = 1/2( |\mathbf{ q}|^2 - |\mathbf{ p}|^2) \; ,  \\
 \tau_3(\mathbf{ q},\mathbf{ p}) &= 1/2 (|\mathbf{ q}|^2 + |\mathbf{ p}|^2) \; ,\\
\rho_1(\mathbf{ q},\mathbf{ p})&=p_2 q_1 - p_1 q_2+p_4 q_3-p_3 q_4 ,\\
\rho_2(\mathbf{ q},\mathbf{ p})&=p_3 q_1-p_1 q_3  + p_4 q_2- p_2 q_4 ,\\
\rho_3(\mathbf{ q},\mathbf{ p})&=\Xi_1(\mathbf{ q},\mathbf{ p})=p_1 q_4-p_4 q_1 + p_3 q_2 - p_2 q_3, \\
\sigma_1 (\mathbf{ q},\mathbf{ p}) &= p_1 q_2-p_2 q_1 + p_3 q_4 - p_4 q_3 ,\\
\sigma_2 (\mathbf{ q},\mathbf{ p}) &=  p_1 q_3-p_3 q_1 + p_4 q_2 - p_2 q_4,\\
\sigma_3 (\mathbf{ q},\mathbf{ p}) &=\Xi_0(\mathbf{ q},\mathbf{ p})=p_1 q_4-p_4 q_1 + p_2 q_3 - p_3 q_2,
\end{aligned}
\end{gather}
for which the following claim was stated in \cite{FerrerCrespo2014}. In $T^*\mathbb{H}_0$ with the standard Poisson bracket $\{,\}$, the  sets of functions 
$$\tau=\{\tau_1,\tau_2,\tau_3\},\quad \rho=\{\rho_1,\rho_2,\rho_3\}, \quad \sigma=\{\sigma_1,\sigma_2,\sigma_3\}$$ 
commute between each other and span Lie algebras in $C^\infty(\mathbb{R}^8)$ isomorphic to $\mathfrak{sl}(2,\mathbb{R})$, $\mathfrak{so}(3) $ and $\mathfrak{so}(3) $ respectively. Therefore, $\mathcal{R} \equiv \mathcal{S} \equiv  SO(3)$ and $ \mathcal{T} \equiv SL(2,\mathbb{R})$, where $ \mathcal{T}$, $ \mathcal{R}$ and $ \mathcal{S}$ are the corresponding Lie groups to the Lie algebras $\tau$, $\rho$ and $\sigma$. In addition the function
\begin{equation}
\label{ }
M(\mathbf{ q},\mathbf{ p})=\frac{1}{2}\sqrt{|\mathbf{ q}|^2|\mathbf{ p}|^2-\langle \mathbf{ q},\mathbf{ p}\rangle^2}
\end{equation}
is the centralizer of $ \mathcal{R}$, $ \mathcal{S}$ and $ \mathcal{T}$. 

Thinking in the connection of the oscillator and the Kepler systems we consider $\tau_3$, $\rho$ and $\sigma$. The first function is the harmonic oscillator and the remaining ones span a Lie algebra isomorphic to $\mathfrak{so}(3)\times \mathfrak{so}(3) \cong \mathfrak{so}(4)$. Moreover, $\rho$ and $\sigma$ are full of ``bilinear functions", that is, $\mathcal{KS}$ map admits alternative definitions by interchanging $\boldsymbol{k}$ in (\ref{eq:KSmap}) with $\boldsymbol{i}$ or $\boldsymbol{j}$, which lead to the alternative bilinear pairs $(\rho_1,\sigma_1)$ and $(\rho_2,\sigma_2)$ respectively. More precisely, $\rho_i$ is always related to a $\mathcal{KS}$ definition equivalent to the one given in \cite{KustaanheimoStiefel1965} and $\sigma_i$ has to do with one definition among (\ref{eq:KSmapAlt}). In what follows we are going to show how besides of the classical bilinear function there are two more that come into play; the twin-bilinear and the centralizer $M$.

The geometric interpretations of the $\mathcal{KS}$ transformations given by (\ref{eq:InterpretGeo1}) and (\ref{eq:InterpretGeo2}) induce in a natural way a pair of $S^1$-actions for which each orbit is mapped to a single point by the $\mathcal{KS}$ map. Namely, let us consider 
\begin{equation}
\label{eq:chiAction}
\chi_i:S^1\times T^*\mathbb{H}_0\longrightarrow T^*\mathbb{H}_0 \qquad \chi_i(\alpha,\mathbf{ q},\mathbf{ p})=(R_{\alpha}^i\mathbf{ q},R_{\alpha}^i\mathbf{ p}),
\end{equation}
being $\mathcal{R}_{\alpha}^0$ and  $\mathcal{R}_{\alpha}^1$ the rotations given by the following matrices 
\begin{equation}
\label{eq:ActionMatrix1}
\mathcal{R}_{\alpha}^0=\left[
\begin{array}{cccc}
 \cos \alpha  & 0 & 0 & -\sin \alpha  \\
 0 & \cos \alpha  & -\sin \alpha  & 0 \\
 0 & \sin \alpha  & \cos \alpha  & 0 \\
 \sin \alpha  & 0 & 0 & \cos \alpha  \\
\end{array}
\right]
\end{equation}
and
\begin{equation}
\label{eq:ActionMatrix2}
\mathcal{R}_{\alpha}^1=\left[
\begin{array}{cccc}
 \cos \alpha  & 0 & 0 & -\sin \alpha  \\
 0 & \cos \alpha  & \sin \alpha  & 0 \\
 0 & -\sin \alpha  & \cos \alpha  & 0 \\
 \sin \alpha  & 0 & 0 & \cos \alpha  \\
\end{array}
\right].
\end{equation}
Both actions are symplectic with respect to the symplectic form $\Omega$ and their momentum maps are given by the bilinear and twin-bilinear functions respectively
\begin{equation}
\label{eq:BilinearFunction}
\Xi_i: T^* \mathbb{H}_0\longrightarrow \mathbb{R}\qquad (\mathbf{ q},\mathbf{ p})\rightarrow \Xi_i(\mathbf{ q},\mathbf{ p}).
\end{equation}

With the above definitions, it is easy to check that the restriction of the $\mathcal{KS}$ map to any orbit of $\Xi_1$, when defined as in \cite{KustaanheimoStiefel1965}, is constant and the same happens when we restrict the definition given in \cite{Saha2009} to any orbit of $\Xi_0$. Although from now on we focus in the definition given in (\ref{eq:KSmap}), what remains of this section is valid to the $\mathcal{KS}$ map no matter what definition you choose. 

Now we specialize to the case in which the bilinear and twin-bilinear relation are imposed, then (\ref{eq:KSmap}) restricts to 
\begin{equation}
\label{eq:KSmapXi}
\mathcal{KS}_0:\mathcal{J}_{0}^i\subset T^* \mathbb{H}_0\longrightarrow T^* \mathbb{R}^3_0\quad (\mathbf{ q},\mathbf{ p})\rightarrow(\mathbf{ x},\mathbf{y}),
\end{equation}
where $\mathcal{J}_{0}^i=\Xi_i^{-1}(0)$.
%
%
\begin{proposition}
The $\mathcal{KS}$ map induces a symplectomorphism between the regular $\chi_i$-reduced space $\mathcal{J}_{0}^i/S^1$ and  $T^* \mathbb{R}^3_0$.
\end{proposition}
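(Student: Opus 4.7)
The plan is to recognize the statement as an instance of Marsden--Weinstein symplectic reduction and then to identify the resulting quotient concretely with $T^*\mathbb{R}^3_0$ via the $\mathcal{KS}$ map. First, I would verify the hypotheses of the reduction theorem: the matrix $\mathcal{R}_\alpha^i$ appearing in (\ref{eq:ActionMatrix1})--(\ref{eq:ActionMatrix2}) is orthogonal and acts diagonally on $(\mathbf{q},\mathbf{p})$, so $\chi_i$ preserves $\Omega=-d\,\mathrm{Re}[\mathbf{p}^*d\mathbf{q}]$; the bilinear function $\Xi_i$ is a momentum map for $\chi_i$ (automatically $\mathrm{Ad}^*$-equivariant because $S^1$ is abelian), as can be checked by computing $\{\,\cdot\,,\Xi_i\}$ on the coordinate functions and matching the infinitesimal generators read off from $\mathcal{R}_\alpha^i$. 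On the open subset of $\mathcal{J}_0^i$ where the isotropy of $\chi_i$ is trivial, $0$ is a regular value of $\Xi_i$; this open subset is exactly the ``regular'' stratum named in the proposition. The reduction theorem then furnishes a unique symplectic form $\omega_{\mathrm{red}}$ on $\mathcal{J}_0^i/S^1$ characterised by $\pi^*\omega_{\mathrm{red}}=\iota^*\Omega$, where $\pi$ is the orbit projection and $\iota$ the inclusion.

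Second, I would show that $\mathcal{KS}$ descends to a diffeomorphism $\overline{\mathcal{KS}}:\mathcal{J}_0^i/S^1\to T^*\mathbb{R}^3_0$. Invariance $\mathcal{KS}\circ\chi_i^\alpha=\mathcal{KS}$ is precisely the geometric interpretations (\ref{eq:InterpretGeo1})--(\ref{eq:InterpretGeo2}); in quaternion form, the factor attached to $\mathbf{q}$ in $\chi_i$ cancels against its conjugate in $\mathbf{q}^*\boldsymbol{k}\mathbf{q}$ and likewise in $\mathbf{q}^*\boldsymbol{k}\mathbf{p}/(2\mathbf{q}^*\mathbf{q})$. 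Hence $\mathcal{KS}\circ\iota$ factors through $\pi$, producing $\overline{\mathcal{KS}}$. Surjectivity follows from the observation recorded above (\ref{eq:KSmapXi}) that imposing $\Xi_i=0$ reduces the image $\Sigma$ of the ambient $\mathcal{KS}$ to all of $T^*\mathbb{R}^3_0$. For injectivity, one checks that the fibers of $\mathcal{KS}|_{\mathcal{J}_0^i}$ are exactly $\chi_i$-orbits: given $\mathcal{KS}(\mathbf{q},\mathbf{p})=\mathcal{KS}(\mathbf{q}',\mathbf{p}')$, the relation $|\mathbf{x}|=|\mathbf{q}|^2$ fixes $|\mathbf{q}|$, the quadratic map $\mathbf{q}\mapsto\mathbf{q}^*\boldsymbol{k}\mathbf{q}$ has $S^1$-fibers parameterised by the $\chi_i$-orbits, and once $\mathbf{q}'$ is placed in the same orbit as $\mathbf{q}$, the coincidence of $\mathbf{y}$ together with the constraint $\Xi_i=0$ pin down $\mathbf{p}'$ inside the same $S^1$-orbit.

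The heart of the proof is the symplectic compatibility $\overline{\mathcal{KS}}^*\Omega_3=\omega_{\mathrm{red}}$, which after pull-back by $\pi$ reads $\mathcal{KS}^*\Omega_3=\iota^*\Omega$ on $\mathcal{J}_0^i$. I would establish it by showing that $\mathcal{KS}$ is a Poisson map modulo the ideal generated by $\Xi_i$: direct computation with (\ref{eq:KSmapCoor}) gives $\{x_a\circ\mathcal{KS},x_b\circ\mathcal{KS}\}=0$ and $\{x_a\circ\mathcal{KS},y_b\circ\mathcal{KS}\}=\delta_{ab}$ identically, while $\{y_a\circ\mathcal{KS},y_b\circ\mathcal{KS}\}$ produces a rational expression whose numerator is a linear combination of $\Xi_i$ and quantities vanishing on $\mathcal{J}_0^i$. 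Combined with the preceding diffeomorphism statement, this yields the claimed symplectomorphism. The main obstacle is the bookkeeping caused by the normalisation factor $1/(2\mathbf{q}^*\mathbf{q})$ in the momentum half of (\ref{eq:KSmap}), whose derivatives contribute cross terms that must be reorganised carefully to expose their proportionality to $\Xi_i$; a cleaner alternative, which I would fall back on if the direct bracket computation becomes unwieldy, is to verify that $\mathcal{KS}^*(\mathbf{y}\,d\mathbf{x})-\mathrm{Re}[\mathbf{p}^*d\mathbf{q}]$ is exact modulo $d\Xi_i$ on $\mathcal{J}_0^i$ by writing both Liouville one-forms in the Projective Euler coordinates introduced in the next section, where the $S^1$-direction becomes one of the angle variables and the identification with the reduced form is immediate.
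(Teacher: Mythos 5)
Your proposal is correct and follows essentially the same route as the paper: factor the $\mathcal{KS}$ map through the $S^1$-quotient of the zero level set $\mathcal{J}_0^i=\Xi_i^{-1}(0)$, observe that the induced map to $T^*\mathbb{R}^3_0$ is a well-defined bijection because $\mathcal{KS}$ is constant on $\chi_i$-orbits, and conclude it is a symplectomorphism from the canonicity of the projection and of the constrained $\mathcal{KS}$ map. The only difference is one of detail rather than of method -- you explicitly verify the injectivity/surjectivity and the Poisson-bracket relations modulo the ideal generated by $\Xi_i$ (and invoke Marsden--Weinstein reduction by name), where the paper simply asserts that the induced map is a bijection and that $\mathcal{KS}_0$ is canonical.
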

\begin{proof}
Let us consider the canonical projection on the $\chi_i$-reduced space
$$\pi:\mathcal{J}_{0}^i\longrightarrow \mathcal{J}_{0}^i/S^1,$$
Given $\mathbf{ z}=(\mathbf{ q},\mathbf{ p})\in\mathcal{J}_{0}^i$, we denote by $\mathcal{O}_{\mathbf{ z}}\subset \mathcal{J}_{0}^i$ the $\chi_i$-orbit through $\mathbf{ z}$. Thus, $\pi(\mathbf{ w})=\left[\mathbf{ z}\right]$ for all $\mathbf{ w}\in\mathcal{O}_{\mathbf{ z}}$. That is, the projection of $\mathcal{O}_{\mathbf{ z}}$ in the reduced space is the class $\left[\mathbf{ z}\right]$. Then, we have that the following map
$$\widetilde{\mathcal{KS}}_{0}:\mathcal{J}_{0}^i/S^1\longrightarrow T^* \mathbb{R}_0^3,\quad \left[\mathbf{ z}\right]\rightarrow \mathcal{KS}_{0}(\mathbf{ w})$$
where $\mathbf{ w}$ is any point in $\mathcal{O}_{\mathbf{ z}}$. Note that $\mathcal{KS}$ transformation maps orbits into single points which guarantees that $\widetilde{\mathcal{KS}}_{0}$ is well defined. Moreover, this map is a bijection and the following diagram is commutative 

\begin{tikzpicture}
  \matrix (m) [matrix of math nodes,row sep=3em,column sep=4em,minimum width=2em]
  {
     \mathcal{J}_{0}^i &  T^* \mathbb{R}_0^3 \\
     \mathcal{J}_{0}^i/S^1 &  \\};
  \path[-stealth]
    (m-1-1) edge node [left] {$\pi$} (m-2-1)
    (m-1-1) edge node [above] {$\mathcal{KS}_{0}$} (m-1-2)
    (m-2-1) edge node [below] { $\phantom{xxx}\widetilde{\mathcal{KS}}_{0}$} (m-1-2);
\end{tikzpicture}
\par\noindent

Thus, since $\pi$ is a canonical map and so does $\mathcal{KS}_{0}$, we have that $\widetilde{\mathcal{KS}}_{0}$ is a symplectomorphism between $ \mathcal{J}_{0}^i$ and $T^* \mathbb{R}_0^3$.
\end{proof}

Traditionally in the literature \cite{KustaanheimoStiefel1965}, the bilinear relation has been emphasized versus the twin-bilinear relation. However, it is clear that both are equivalent.

\section{The $\mathcal{KS}$ Map Through the Projective Euler Variables}
\label{sec:Angles}
In this section we will provide a different way of deriving the $\mathcal{KS}$ map based in a modified version of the Euler parameters. This set of variables is a modified version of the one considered by \cite{Ferrer2010} and we will show that the angles of the $S^1$ reductions given by the actions (\ref{eq:chiAction}) are among the Projective Euler variables, which are defined by means of the following transformation
$\mathcal{PE}_F: (\rho,\phi,\theta,\psi,P,\Phi,\Theta,\Psi)\rightarrow (\mathbf{ q},\mathbf{ p})$
\begin{eqnarray}\label{parametros}
\vspace{-0.3cm}
&&\hspace{-0.8cm}q_1=\sqrt{\rho}\, \cos\frac{\theta}{2} \sin\frac{\phi+\psi}{2},\,
q_3=\sqrt{\rho}\, \sin\frac{\theta}{2} \sin\frac{\phi-\psi}{2}, \\
&&\hspace{-0.85cm}q_2=\sqrt{\rho}\, \sin\frac{\theta}{2} \cos\frac{\phi-\psi}{2},\,
q_4=\sqrt{\rho}\,\cos\frac{\theta}{2} \cos\frac{\phi+\psi}{2},\nonumber
\end{eqnarray}
with $(\rho,\phi,\theta,\psi)\in R^+\times(0,2\pi)\times (0,\pi)\times
(0,2\pi)$. This set of variables is not new, see for instance \cite{Ikeda1970,Sti,Bar,Cor}). 
The associated momenta are obtained by canonical extension of (\ref{parametros}), that is,  $\sum p_idq_i= P\,d\rho+\Phi\, d\phi+\Theta\, d\theta+\Psi\,d\psi$
\begin{eqnarray}\label{momentos}
&& \hspace{-0.3cm}P = \frac{1}{2\sum q_i^2}(q_1p_1+q_2p_2+q_3p_3+q_4p_4),\nonumber\\[0.8ex]
&&\hspace{-0.3cm}\Theta=\frac{(q_1p_4+q_4p_1)(q_3^2 +q_2^2) -(q_2p_3+q_3 p_2)(q_1^2 +q_4^2)}
{2\,\sqrt{(q_1^2 + q_4^2)(q_2^2+q_3^2)}},\nonumber\\
&&\hspace{-0.3cm}\Phi=\frac{1}{2}\,\Xi_1=\frac{1}{2}(p_1 q_4-p_4 q_1 - p_2 q_3 + p_3 q_2),\\
&&\hspace{-0.3cm}\Psi=\frac{1}{2}\,\Xi_0 =\frac{1}{2}(p_1 q_4-p_4 q_1+ p_2 q_3 - p_3 q_2 ).\nonumber
\end{eqnarray}
A complete treatment of this variables can be found in \cite{FerrerCrespo2014,Crespo2015}. Note that these variables endow both, the $\hat{\chi}$ and ${\chi}_i$ reduced spaces with a suitable set of local coordinates. 


Then, excluding the invariant manifolds $\mathcal{M}_1\!\!=\{(q,Q)|q_1\!=\!q_4\!=\!0\}$ and $\mathcal{M}_2=\{(q,Q)|q_2\!=\!q_3\!=\!0\}$ where Levi-Civita transformation already shows the connection of the 2-D  Kepler and isotropic oscillators,  the Hamiltonian (\ref{O4G}) in the new variables may be written as
\begin{eqnarray}\label{newham}
&&\hspace{-0.8cm}\mathcal{H}_\omega=\mathcal{H}(\rho,\theta,-,-,P,\Theta,\Phi,\Psi)\nonumber\\
&&\hspace{-0.2cm}=  \frac{\rho\, \omega}{2} +2\rho P^2 +
\frac{2}{\rho}\;\left(\Theta^2  +
\frac{\Phi^2+\Psi^2-2\,\Phi\Psi\,\cos\theta}{\sin^2\theta} \right)
\end{eqnarray}
{\it i.e.} variables $\phi$ and $\psi$ are cyclic, with
$\Phi$ and $\Psi$ as the corresponding first integrals. Thus, the system of differential equations reduces to
\[\frac{d\rho}{d\tau}=\frac{\partial \mathcal{H}}{\partial P},\quad 
\frac{d\theta}{d\tau}=\frac{\partial \mathcal{H}}{\partial \Theta},\quad
\frac{dP}{d\tau}=-\frac{\partial \mathcal{H}}{\partial \rho},\quad
\frac{d\Theta}{d\tau}=-\frac{\partial \mathcal{H}}{\partial \theta}\]
and two quadratures 
\begin{equation}\label{doscuadraturas}
\phi=\int (\partial \mathcal{H}/\partial \Phi)\,d\tau \quad {\rm and} \quad 
\psi=\int(\partial \mathcal{H}/\partial \Psi)\,d\tau.
\end{equation}

\subsection{On the Role of the Regularizations}
\label{sec:Regularizations}
Although the 4-D harmonic oscillator in Cartesian variables is already linear, regular and separable, the preceding change of variables could be interpreted as an unnecessary effort. However, here is when regularizations come into scene to show the  4-D oscillator's dynamical richness.
\begin{proposition}
\label{propo:Separability}
After regularization, the 4-D harmonic oscillator expressed in Projective Euler variables becomes separable and includes the dynamics of the spheric rotor.
\end{proposition}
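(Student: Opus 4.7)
The plan is to exhibit an explicit regularization that converts (\ref{newham}) into a sum of two independent Hamiltonians on a fixed energy level, one depending only on $(\rho,P)$ and the other only on $(\theta,\Theta,\Phi,\Psi)$, and to recognize the angular piece as the Hamiltonian of the spherical rotor written in Euler angles. The key observation is that the difficulty in (\ref{newham}) is the coupling through the factor $\rho$ and $1/\rho$, which is precisely the kind of mixing that a Poincar\'e--type time reparametrization can clear away, exactly as in the Levi-Civita computation reproduced in Section~\ref{sec:LC}.

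Concretely, I would fix an energy value $h$ and pass to the auxiliary Hamiltonian
\begin{equation*}
\widetilde{\mathcal{H}}\;=\;\frac{\rho}{2}\bigl(\mathcal{H}_\omega-h\bigr),
\end{equation*}
which generates the same trajectories as $\mathcal{H}_\omega$ on the level set $\{\mathcal{H}_\omega=h\}$ after the time rescaling $dt=(\rho/2)\,ds$. Substituting (\ref{newham}) and multiplying through yields
\begin{equation*}
\widetilde{\mathcal{H}}\;=\;\Bigl(\rho^2 P^2+\tfrac{\omega}{4}\rho^2-\tfrac{h}{2}\rho\Bigr)\;+\;\Bigl(\Theta^2+\frac{\Phi^2+\Psi^2-2\,\Phi\Psi\cos\theta}{\sin^2\theta}\Bigr).
\end{equation*}
The first parenthesis involves only the pair $(\rho,P)$ while the second involves only $(\theta,\Theta)$ together with the two first integrals $\Phi,\Psi$ that survive from (\ref{newham}) (recall $\phi,\psi$ were already cyclic). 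Thus $\widetilde{\mathcal{H}}=\widetilde{\mathcal{H}}_{\mathrm{rad}}(\rho,P)+\widetilde{\mathcal{H}}_{\mathrm{rot}}(\theta,\Theta;\Phi,\Psi)$ as required.

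The next step is to identify $\widetilde{\mathcal{H}}_{\mathrm{rot}}$. On any leaf with $\Phi$ and $\Psi$ constant, the expression
\begin{equation*}
\widetilde{\mathcal{H}}_{\mathrm{rot}}\;=\;\Theta^2+\frac{\Phi^2+\Psi^2-2\,\Phi\Psi\cos\theta}{\sin^2\theta}
\end{equation*}
coincides, up to the usual factor $1/(2I)$, with the kinetic Hamiltonian of a free spherical top in classical Euler angles $(\phi,\theta,\psi)$ with conjugate momenta $(\Phi,\Theta,\Psi)$; this is a standard textbook computation that I would simply invoke, noting that it is exactly the same angular block obtained in \cite{FerrerCrespo2014} when the Euler variables are used to describe the rigid body. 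Hence the reduced angular dynamics that the regularization decouples is precisely the spherical-rotor dynamics.

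The main technical point to be careful about is the interplay between the energy fixing and the time rescaling: the separation above holds as an equality of Hamiltonians only after restricting to the level $\{\mathcal{H}_\omega=h\}$, and one must check that the flow of $\widetilde{\mathcal{H}}$, reparametrized back to $t$, reproduces the original flow on this level. This is the familiar Poincar\'e trick, already used in (\ref{fhamil})--(\ref{eq:HamAux}), and the verification is a direct Hamilton-equation computation. Once this is in place, separability is immediate because $\{\widetilde{\mathcal{H}}_{\mathrm{rad}},\widetilde{\mathcal{H}}_{\mathrm{rot}}\}=0$ (they act on disjoint sets of canonical variables), and the spherical-rotor interpretation follows by the identification above. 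I expect the only non-routine issue to be presenting the energy-level restriction cleanly and stating explicitly that the excluded invariant manifolds $\mathcal{M}_1,\mathcal{M}_2$ are genuinely not in the domain of the Projective Euler chart, so the separation is valid on the full open dense set where these coordinates are defined.
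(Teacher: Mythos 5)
Your proposal is correct and follows essentially the same route as the paper: the paper applies the Poincar\'e regularization $\mathcal{K}=\tfrac{\rho}{4}(\mathcal{H}_\omega-h)$ on the level $\mathcal{H}_\omega=h$, obtaining the same radial/angular split with the angular block identified as the spherical rotor (equal moments of inertia in Euler angles). Your choice of the factor $\rho/2$ instead of $\rho/4$ only rescales both pieces by a constant and is immaterial to separability and to the rotor identification.
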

\begin{proof}
Considering the Hamiltonian (\ref{newham}) we fix the energy level $\mathcal{H}_\omega=h$ and  according to Poincar\'e technique with $d\tau=(\rho/4)ds$, the Hamiltonian becomes
\begin{equation}
\label{eq:HamSeparado}
\mathcal{K}=\frac{\rho}{4}(\mathcal{H}_\omega-h)=\mathcal{K}_{\rho}+\mathcal{K}_{\theta}
\end{equation}
where
\begin{equation}
\label{eq:HamSeparadoBis}
\mathcal{K}_{\rho}=\frac{\omega\rho^2}{8}+ \frac{\rho^2P^2}{2}-\frac{h\,\rho}{4} ,\quad \mathcal{K}_{\theta}= \frac{1}{2}\left(\Theta^2  +
\frac{\Phi^2+\Psi^2-2\,\Phi\Psi\,\cos\theta}{\sin^2\theta} \right),
\end{equation}
in the manifold $\mathcal{K}=0$. That is to say, $\mathcal{K}$ has been separated in two 1-DOF subsystems being $\mathcal{K}_{\theta}$ the spherical rotor.  
\end{proof}

Note that the Hamiltonian $\mathcal{K}_{\theta}$ is obtained by making equal all the principal moments of inertia of the free rigid body system in Euler angles, see \cite{MarsdenRatiu}.

The main result of this work is to show the relationship of the $\mathcal{KS}$ map and the Projective Euler variables. For this purpose just a  rearrangement of (\ref{newham}) and a regularization are needed to reach the spatial Kepler system. The next result is also given in a preliminary version of this work by one of the authors \cite{Ferrer2010}.

\begin{theorem}
\label{theorem:Euler}
The restriction of the 4-D harmonic oscillator to either $\Phi=0$ or $\Psi=0$ in Projective Euler variables becomes a 3-D Kepler system and a quadrature.
\end{theorem}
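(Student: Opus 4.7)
The plan is to handle $\Psi=0$ (equivalently $\Xi_0=0$, the twin-bilinear relation) in detail; the case $\Phi=0$ is completely symmetric, since (\ref{newham}) is invariant under $(\Phi,\Psi)\leftrightarrow(\Psi,\Phi)$ combined with the relabelling $\phi\leftrightarrow\psi$. Because neither $\phi$ nor $\psi$ appears in (\ref{newham}), the momenta $\Phi,\Psi$ are already first integrals, so the locus $\{\Psi=0\}$ is invariant under the oscillator flow. Substituting $\Psi=0$ kills the cross term $-2\Phi\Psi\cos\theta$ and leaves
\begin{equation}
\mathcal{H}_\omega\big|_{\Psi=0} \;=\; \frac{\rho\,\omega}{2} + 2\rho\,P^2 + \frac{2\Theta^2}{\rho} + \frac{2\Phi^2}{\rho\sin^2\theta},
\end{equation}
in which $\psi$ no longer appears. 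The subsystem in $(\rho,\theta,\phi;P,\Theta,\Phi)$ therefore evolves autonomously in three degrees of freedom, while the ignorable coordinate $\psi$ is recovered from the single quadrature
\begin{equation}
\frac{d\psi}{d\tau} \;=\; \frac{\partial \mathcal{H}_\omega}{\partial \Psi}\bigg|_{\Psi=0} \;=\; -\frac{4\Phi\cos\theta}{\rho\sin^2\theta},
\end{equation}
once $\rho(\tau)$, $\theta(\tau)$ and the conserved value of $\Phi$ are known from the reduced flow.

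Next I would fix an energy level $\mathcal{H}_\omega=h$ and divide the restricted constraint by $4\rho$, which reorganises the level-set equation into
\begin{equation}
\frac{1}{2}\!\left(P^2 + \frac{\Theta^2}{\rho^2} + \frac{\Phi^2}{\rho^2\sin^2\theta}\right) - \frac{h/4}{\rho} \;=\; -\frac{\omega}{8}.
\end{equation}
Reading $(\rho,\theta,\phi)$ as spherical coordinates on $\mathbb{R}_0^3$ with radius $\rho$, colatitude $\theta$ and longitude $\phi$, and $(P,\Theta,\Phi)$ as their canonical momenta, the left-hand side is exactly the spatial Kepler Hamiltonian $\mathcal{K}_\mu$ with $\mu=h/4$, and the right-hand side is the bound Kepler energy $E=-\omega/8$. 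This identification is geometrically natural: direct substitution of (\ref{parametros}) into (\ref{eq:KSmapCoor}) shows that the $\mathcal{KS}$ image takes the standard spherical form $\mathbf{x}=(\rho\sin\theta\cos\phi,\rho\sin\theta\sin\phi,\rho\cos\theta)$, independently of $\psi$. Moreover the off-shell identity
\begin{equation}
\mathcal{H}_\omega \;-\; h \;=\; 4\rho\!\left(\mathcal{K}_{h/4} + \frac{\omega}{8}\right)
\end{equation}
holds on $\{\Psi=0\}$, so the Poincar\'e time rescaling $dt=4\rho\,ds$, of the type already invoked in Proposition~\ref{propo:Separability}, turns the oscillator vector field on $\{\mathcal{H}_\omega=h,\Psi=0\}$ into a positive conformal multiple of the Kepler vector field on $\{\mathcal{K}_{h/4}=-\omega/8\}$.

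The main obstacle I anticipate is purely bookkeeping: confirming that the conformal factor $4\rho$ (rather than $2\rho$ or $\rho$), the attraction $\mu=h/4$, and the energy $E=-\omega/8$ emerge from (\ref{newham}) exactly as claimed, and that no spurious factors of $2$ from the half-angles in (\ref{parametros})--(\ref{momentos}) creep into the match between $\Phi^2/(\rho^2\sin^2\theta)$ and the angular part of the spherical Kepler Hamiltonian. Canonicity of the Projective Euler transformation then guarantees that this identification is genuinely symplectic on the three-degree-of-freedom sub-bundle (not merely a numerical matching of energies), which completes the reduction of the restricted oscillator to a spatial Kepler system plus the $\psi$ quadrature; the $\Phi=0$ branch follows at once by the symmetry already noted.
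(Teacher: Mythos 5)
Your proposal is correct and follows essentially the same route as the paper's proof: restrict to the invariant manifold $\Psi=0$, apply the Poincar\'e-type rescaling with conformal factor $1/(4\rho)$ (your identity $\mathcal{H}_\omega-h=4\rho\,(\mathcal{K}_{h/4}+\tfrac{\omega}{8})$ on $\{\Psi=0\}$ is exactly the content of (\ref{ham3})--(\ref{KeplerCoulomb})), identify the spherical Kepler Hamiltonian with $\gamma=h/4$ on the level $-\omega/8$, and recover $\psi$ by a single quadrature, with $\Phi=0$ handled by the $(\Phi,\Psi)$, $(\phi,\psi)$ symmetry. The only blemish is notational: with $t$ the oscillator time and $s$ the new time, the rescaling consistent with your conformal identity is $dt=(4\rho)^{-1}ds$ as in (\ref{regular1}) (your Proposition~\ref{propo:Separability} analogy uses the \emph{other} regularization $d\tau=(\rho/4)\,ds$), not $dt=4\rho\,ds$; since you only need the vector fields to agree up to a positive conformal factor, this inversion does not affect the validity of the argument.
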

\begin{proof}
After a  time regularization $\tau\rightarrow s$  given  by 
\begin{equation}\label{regular1}
d\tau=(4\rho)^{-1}\,ds.
\end{equation}
the flow is described by the Hamiltonian $\tilde{\mathcal{H}}=\frac{1}{4\rho}(\mathcal{H}- h)$ on the manifold $\tilde{\mathcal{H}}=0$, where $h$ is a fixed value $\mathcal{H}=h$. Equivalently, we can replace $\tilde{\mathcal{H}}$ by
\begin{equation}\label{ham3}
\tilde{\mathcal{K}}=\frac{1}{2}\left(P^2 + \frac{\Theta^2}{\rho^2}+\frac{\Phi^2
+\Psi^2-2\,\Phi\Psi\,\cos\theta}{\rho^2\,\sin^2\theta}
\right)- \frac{h}{4\rho}
\end{equation}
in the manifold $\tilde{\mathcal{K}}=-\frac{\omega}{8}$. Moreover, the following rearrangement is convenient
\begin{equation}\label{hamkeplerbis}
\tilde{\mathcal{K}}= \mathcal{H}_K + \frac{\Psi^2-2\Phi\Psi\cos \theta}{2\rho^2\,\sin^2\theta},
\end{equation}
where denoting $\gamma= h/4$, we have that $\mathcal{H}_K$ is given by
\begin{equation}\label{KeplerCoulomb}
\mathcal{H}_K=\frac{1}{2}\left(P^2 + \frac{\Theta^2}{\rho^2}+\frac{\Phi^2}{\rho^2\,\sin^2\theta}
\right)- \frac{\gamma}{\rho}.
\end{equation}
That is to say, the Hamiltonian of the Kepler system in spherical coordinates, see \cite{Goldstein}
\begin{gather}
\begin{aligned}
\label{esfericascarte}
x&=\rho\,\sin\theta\,\cos\phi,\\
y&=\rho\,\sin\theta\,\sin\phi,\\
z&=\rho\,\cos\theta,\quad \\
X&=\frac{1}{2\,\rho\,\sin\theta } \left[\cos\phi \left(\Theta  \sin2 \theta+2 P \rho  \sin ^2\theta \right)-2 \Phi  \sin\phi\right]&\\
Y&=\frac{1}{2\,\rho\,\sin\theta }  \left[\sin\phi \left(\Theta  \sin(2 \theta )+2 P \rho  \sin ^2\theta \right)+2 \Phi  \cos\phi\right]&\\
Z&=P \cos \theta-\frac{\Theta  \sin\theta}{\rho }&
\end{aligned}
\end{gather}

Be aware of the fact that we are free to choose between the terms ${\Psi^2}/({\rho^2\sin^2\theta})$ or
${\Phi^2}/({\rho^2\sin^2\theta})$ for the definition of $\mathcal{H}_K$. In other words, we have to equivalent bilinear relations.

The system of differential equations defined by (\ref{hamkeplerbis}) reads as follows
\begin{eqnarray}\label{ecuaciones}
&&\frac{d\rho}{ds}=\phantom{-}\frac{\partial \tilde{\mathcal{K}}}{\partial P}=\phantom{-}
\frac{\partial \mathcal{H}_K}{\partial P},\nonumber\\[0.5ex]
&&\frac{d\theta}{ds}=\phantom{-}\frac{\partial \tilde{\mathcal{K}}}{\partial \Theta}=\phantom{-}\frac{\partial \mathcal{H}_K}{\partial \Theta},\\[0.5ex]
&&\frac{dP}{ds}=-\frac{\partial \tilde{\mathcal{K}}}{\partial \rho}=-\frac{\partial \mathcal{H}_K}{\partial \rho}+\frac{\Psi(\Psi-2\Phi\,\cos \theta)}{\rho^3\,\sin^2\theta},\nonumber\\ [0.5ex]
&&\frac{d\Theta}{ds}=-\frac{\partial \tilde{\mathcal{K}}}{\partial \theta}=-\frac{\partial \mathcal{H}_K}{\partial \theta}- \frac{\Psi(\Psi\cos \theta-\Phi(1+\cos^2 \theta))}{\rho^2\sin^3\theta},\nonumber
\end{eqnarray}
together with the quadratures (\ref{doscuadraturas}). Considering our choice in the definition of (\ref{KeplerCoulomb}), we restrict to the manifold $\Psi=0$. Thus, the harmonic oscillator (\ref{ecuaciones}) with the 
quadrature $ \phi=\int (\partial \mathcal{H}_K/\partial \Phi)\,ds$, reduces to the Keplerian system in
the manifold $\mathcal{H}_K=-\frac{\omega}{8}$. Once the system (\ref{ecuaciones}) is integrated we obtain $\psi$ 
$$\psi=\int \frac{\partial \tilde{\mathcal{K}}}{\partial \Psi}\,ds=  -\int \frac{\Phi \cos \theta}{\rho^2\sin^2\theta} \,ds , $$
by injecting the expressions for $\Phi$, $\theta$ and $\rho$.
\end{proof}

Theorem~\ref{theorem:Euler} relates the 4-D harmonic oscillator with a generalized Kepler system. Indeed, the Hamiltonian (\ref{hamkeplerbis}) is connected to Hartmann and other ring-shaped potentials \cite{Kib87}.   

Note that, considering the inverse of the Projective Euler transformation (\ref{parametros}), we may see the transformation 
from spherical to Cartesian (\ref{esfericascarte}) as a projection $\mathbb{H}\rightarrow \mathbb{R}^3$. More precisely, reversing (\ref{parametros}) we have $\rho = \sum q_i^2$ and 
\begin{eqnarray}\label{Eulerinversa}
&&\sin\theta =\frac{2\Delta}{\rho}, \quad \cos\theta =\frac{q_1^2 -q_2^2-
q_3^2 +q_4^2 }{\rho},\nonumber\\
&&\sin\phi = \frac{q_1q_2+q_3q_4}{\Delta},
\quad \cos\phi = \frac{q_2q_4-q_1q_3}{\Delta},\nonumber\\
&&\sin\psi = \frac{q_1q_2-q_3q_4}{\Delta},
\quad \cos\psi = \frac{q_2q_4+q_1q_3}{\Delta},\nonumber
\end{eqnarray}
where $\Delta=\sqrt{(q_1^2 +q_4^2)(q_2^2 +q_3^2)}$ and taking into account that 
$$\Psi =\dfrac{1}{2}\,\Xi_0=\frac{1}{2}(p_1 q_4-p_4 q_1 - p_3 q_2 + p_2 q_3)=0$$
we obtain 
\begin{gather}
\begin{aligned}\label{eq:MomentosAndoyer}
&x=2(q_2q_4-q_1q_3),  \;                   &  X  =\dfrac{1}{2\rho}\,( p_4 q_2   + p_2 q_4- p_1 q_3-p_3 q_1),  \\
&y=2(q_1q_2+q_3q_4),   \;                 &  Y  =\dfrac{1}{2\rho}\,( p_2 q_1 + p_1 q_2 + p_4 q_3 + p_3 q_4), \\
&z= q_1^2 -q_2^2-q_3^2+q_4^2,     &  Z  =\dfrac{1}{2\rho}\, (p_1 q_1 - p_2 q_2 - p_3 q_3 + p_4 q_4),
\end{aligned}
\end{gather}
in other words, the $\mathcal{KS}$ map, which may also be obtained as the transformation making commutative the diagram in Figure~\ref{fig:DiagramaEuler}.
\begin{figure}[h]
\centering
\begin{tikzpicture}
  \matrix (m) [matrix of math nodes,row sep=3em,column sep=4em,minimum width=2em]
  {
     T^* \mathbb{H}_0 & T^* \mathbb{H}_0 \\
     T^* \mathbb{R}^3_0 & T^* \mathbb{R}^3_0 \\};
  \path[-stealth]
    (m-1-1) edge node [left] {$\mathcal{KS}$} (m-2-1)
    (m-1-1) edge node [below] {$\mathcal{PE}_F$} (m-1-2)
    (m-2-2) edge node [below] {$\Gamma$} (m-2-1)
    (m-1-2) edge node [right] {$\pi$} (m-2-2);
\end{tikzpicture} 
\caption{Commutative diagram. The map $\Gamma$ is the transformation from spherical to Cartesian coordinates and $\pi$ is the projection $(\rho,\phi,\theta,\psi,R,\Phi,\Theta,\Psi)\rightarrow (\rho,\phi,\theta,R,\Phi,\Theta)$.}
\label{fig:DiagramaEuler}
\end{figure}

Up to now we have presented the $\mathcal{KS}$ map as the connection between the flow of the 4-D oscillator and the bounded spatial Kepler system. However, both systems are super-integrable and hence their flows are made of periodic closed curves, which in addition define planar trajectories. Thus, one could ask for a connection beyond the $\mathcal{KS}$ map by moving to the Keplerian orbital plane. That is to say, we look for a connection of the 4-D oscillator and the planar Kepler system. This task is carried out by choosing a suitable set of variables, which are given by a partial transformation of the phase space. Indeed, these new coordinates are obtained by taking into account the separation of variables given in Proposition~\ref{propo:Separability}, where the pair $(\rho,\,P)$ is fixed and the remaining variables $(\phi,\,\theta,\,\psi,\,\Phi,\,\Theta,\,\Psi)$ are transformed to the action-angle variables for the Hamiltonian $\mathcal{K}_{\theta}$. Details are to be found in \cite{Heard}, pag. 88. Gathering of all of them,  it leads to the Projective Andoyer variables given in  \cite{FerrerCrespo2014,Crespo2015}
$$\mathcal{PA}_C: (\rho,\lambda,\mu,\nu,P,\Lambda,M,N)\rightarrow (\mathbf{ q},\mathbf{ p}),$$
which, after some computations and using quaternionic notation, lead to the following compact formula for the explicit change of variables
\begin{gather}
\begin{aligned}
\label{eq:AndoyerVariablesMomenta}
&  \mathbf{q}=\sqrt{\rho}\,\mathbf{q}_{\mathbf{k},\nu}\,\mathbf{q}_{\mathbf{i},J}\,\mathbf{q}_{\mathbf{k},\mu}\,\mathbf{q}_{\mathbf{i},I}\,\mathbf{q}_{\mathbf{k},\lambda},\\
& \mathbf{p}=\dfrac{1}{4\rho^2}\,\mathbf{q} \,\mathbf{w}^*,
\end{aligned} 
\end{gather}
where $\mathbf{q}_{\mathbf{l},\alpha}=(\cos\alpha+\mathbf{l}\sin\alpha)$ for $\mathbf{l}=\mathbf{i},\mathbf{j},\mathbf{k}$ and  $\mathbf{w}=(\tau_1, \rho_1, \rho_2, \rho_3)$, which components were defined in (\ref{eq:Invariants}) and are given by
\begin{gather}
\begin{aligned}
\label{eq:OmegasAndoyerGsAndoyer}
& \tau_1= \rho P,  \\ \nonumber
& \rho_1=2\sqrt{M^2-N^2}\sin \nu, \\ 
& \rho_2=2\sqrt{M^2-N^2}\cos \nu,\\ 
&\rho_3=  2N.
\end{aligned} 
\end{gather}

Note that these variables are not defined for $M=0$, since the motion is rectilinear. The following result shows how the bilinear and twin-bilinear functions can be used to connect the 4-D harmonic oscillator with a planar Kepler system. 
\begin{theorem}
\label{therorem:Andoyer}
System (\ref{O4G}) in Projective Andoyer variables, when properly regularized,
includes the planar Keplerian system for any value of the bilinear functions $\Lambda$ and $N$.
\end{theorem}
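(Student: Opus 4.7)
The plan is to follow the same two-step strategy as in Theorem~\ref{theorem:Euler}, namely separation followed by a Kepler-type time reparametrization, but carried out in the Projective Andoyer chart rather than the Projective Euler one. The decisive observation is that the spheric rotor Hamiltonian $\mathcal{K}_{\theta}$ of Proposition~\ref{propo:Separability} is completely integrable, so its action--angle transformation should absorb all the dependence on $\Phi,\Theta,\Psi$ into the single angular action $M$, leaving the two ``bilinear'' momenta $\Lambda$ and $N$ as ignorable coordinates that can take arbitrary values.

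Concretely, I would first rewrite (\ref{O4G}) in the chart (\ref{eq:AndoyerVariablesMomenta}). For the spheric rotor the total squared angular momentum satisfies $M^{2}=\Theta^{2}+(\Phi^{2}+\Psi^{2}-2\Phi\Psi\cos\theta)/\sin^{2}\theta$ and is, by construction, the nontrivial action of the Andoyer reduction. Substituting into (\ref{newham}) collapses the oscillator to
\begin{equation*}
\mathcal{H}_{\omega}=\frac{\rho\,\omega}{2}+2\rho\,P^{2}+\frac{2M^{2}}{\rho},
\end{equation*}
in which $\lambda,\mu,\nu$ are all cyclic. I would then fix $\mathcal{H}_{\omega}=h$ and apply the Kepler-type regularization $d\tau=(4\rho)^{-1}ds$ used in Theorem~\ref{theorem:Euler}, which replaces the flow by that of $\widetilde{\mathcal{H}}=(\mathcal{H}_{\omega}-h)/(4\rho)$ on its zero level. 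A short computation should yield
\begin{equation*}
\widetilde{\mathcal{H}}=\frac{P^{2}}{2}+\frac{M^{2}}{2\rho^{2}}-\frac{\gamma}{\rho}+\frac{\omega}{8},\qquad \gamma=\frac{h}{4},
\end{equation*}
which on the prescribed level is exactly the planar Kepler Hamiltonian in polar coordinates $(\rho,\mu)$ with conjugate momenta $(P,M)$ and attraction parameter $\gamma$, evaluated at energy $-\omega/8$. Since $\Lambda$ and $N$ do not appear anywhere in $\widetilde{\mathcal{H}}$, they remain arbitrary integrals and contribute only two trivial quadratures for $\lambda,\nu$; hence the planar Keplerian reduction is independent of their values, in sharp contrast with Theorem~\ref{theorem:Euler} where one of them had to be set to zero.

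The principal obstacle I anticipate is not the algebra of the regularization, which is essentially identical to that of Theorem~\ref{theorem:Euler}, but rather the verification that the quaternionic formulas (\ref{eq:AndoyerVariablesMomenta})--(\ref{eq:OmegasAndoyerGsAndoyer}) really implement a canonical transformation whose restriction to the rotor factor is the classical Andoyer action--angle change, producing $\mathcal{K}_{\theta}=M^{2}/2$ with $\mu$ canonically conjugate to $M$. This is the standard Andoyer construction (as in \cite{FerrerCrespo2014,Crespo2015}) but requires a careful check of the symplectic pairings in the quaternionic setting; once it is in hand, the identification of $(\rho,\mu,P,M)$ with the polar phase space of the planar Kepler problem is immediate and the theorem follows from the explicit formula for $\widetilde{\mathcal{H}}$ above.
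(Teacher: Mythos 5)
Your proposal is correct and follows essentially the same route as the paper: the paper's proof is exactly the regularization $g(\rho)=1/(4\rho)$ applied to $\mathcal{H}_\omega-h$ in the Projective Andoyer chart, yielding $\tilde{\mathcal{K}}=\tfrac12\bigl(P^2+M^2/\rho^2\bigr)-\gamma/\rho$ on the level $-\omega/8$, with $\Lambda$ and $N$ absent (your version merely keeps the $\omega/8$ inside the Hamiltonian and works on the zero level, which is equivalent). The intermediate step you spell out — that the rotor part collapses to $M^2$ so that $\mathcal{H}_\omega=\rho\omega/2+2\rho P^2+2M^2/\rho$ — is exactly what the paper takes from the Andoyer action--angle construction cited in \cite{FerrerCrespo2014,Crespo2015,Heard}, so your anticipated "obstacle" is precisely the content the paper delegates to those references rather than a gap.
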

\begin{proof}
Let us consider the regularization $g(\rho) = 1/(4\rho)$ and the Projective Andoyer variables. Then, considering $\gamma=h/4$ we obtain
\begin{equation}\label{ESAT}
\tilde{\mathcal{K}} = g(\rho) (\mathcal{H}_{\omega}-h)= \frac{1}{2}\Big(P^2 + \frac{M^2}{\rho^2} \Big) - \frac{\gamma}{\rho}
\end{equation}
in the manifold $ \tilde{\mathcal{K}} = -  \omega/8$, which is the hamiltonian of the planar Kepler system in polar coordinates, see \cite{Meyer}  
\end{proof}

\begin{remark}
It is important to be aware of the fact that in our development $(\lambda,\nu,\Lambda, N)$ are integrals for which no particular value need to be fixed at any time. This feature is in contrast with the fixed value of the bilinear relation identically zero. The explanation of this apparent contradiction is that the cases $\Lambda\neq0$ or $N\neq0$ do not allow for $M=0$, which correspond with rectilinear orbits. Thus, Projective Andoyer variables connect the 4-D oscillator and the non-rectilinear 2-D Kepler system.
\end{remark}
\begin{figure}[h]
\centering
\begin{tikzpicture}
  \matrix (m) [matrix of math nodes,row sep=3em,column sep=4em,minimum width=2em]
  {
     T^* \mathbb{H}_0 & T^* \mathbb{H}_0 \\
     T^* \mathbb{R}^2_0 & T^* \mathbb{R}^2_0 \\};
  \path[-stealth]
    (m-1-1) edge node [left] {$\Upsilon$} (m-2-1)
    (m-1-1) edge node [below] {$\mathcal{PA} 	_F$} (m-1-2)
    (m-2-2) edge node [below] {$\Sigma$} (m-2-1)
    (m-1-2) edge node [right] {$\pi$} (m-2-2);
\end{tikzpicture} 
\caption{Commutative diagram. The map $\Sigma$ is the transformation from polar to Cartesian coordinates and $\pi$ is the projection $(\rho,\lambda,\mu,\nu,R,\Lambda,M,N)\rightarrow (\rho,\mu,R,M)$.}
\label{fig:DiagramaAndoyer}
\end{figure}

The diagram given in Figure~\ref{fig:DiagramaAndoyer} is the Andoyer analogue version to the corresponding diagram in Figure~\ref{fig:DiagramaEuler}. In the Euler diagram the $\mathcal{KS}$ map makes it commutative, for the Andoyer case it is a matter of study to look for a $\Upsilon$ map making the Andoyer diagram commutative.

\section{Conclusions}
\label{sec:Conclusions}
Besides of the valuable geometric insight of the $\mathcal{KS}$ map given in \cite{Saha2009}, it also admits an interpretation in terms of the $S^1$-reductions associated to the bilinear or the twin-bilinear functions. Moreover, it is also shown the way in which the old angles coming from celestial mechanics are nothing but the angles associated to the fibre of the mentioned  actions. This feature of the angles coordinates enables them for a very simple and straightforward derivation of the $\mathcal{KS}$ map. Precisely, by means of the Projective Euler variables we show in a direct way the connection between the 4-D isotropic harmonic oscillator and the bounded spatial Kepler system, providing an alternative construction of the $\mathcal{KS}$ map. In addition, we have shown that there are more symmetries playing a role in this setting, \textit{i.e.}, the twin-bilinear relation and the centralizer. Projective Andoyer variables take advantage of these symmetries to connect the 4-D isotropic harmonic oscillator and the bounded planar Kepler system. As a coming task, these authors would like to investigate the existence, properties and expression in terms of quaternions of $\Upsilon$, the analogous to the $\mathcal{KS}$ map given in Figure~\ref{fig:DiagramaAndoyer}.



\section*{Acknowledgements}

Support from Research Agencies of Spain and Chile is acknowledged. They came in the form of research projects MTM2015-64095-P and ESP2013-41634-P, of the Ministry of Science of Spain and from the project 11160224 of the Chilean national agency FONDECYT. Thanks to J. Zapata for helping us checking some expressions. 




\end{document}